\numberwithin{equation}{section}
\newtheorem{theorem}{Theorem}[section]
\newtheorem{lemma}[theorem]{Lemma}
\begin{document}

\title{Seiberg-Witten theory and modular lambda function}

\author{Wenzhe Yang}
\address{SITP, Physics Department, Stanford University, Stanford, CA, 94305}
\email{yangwz@stanford.edu}

\begin{abstract}
In this paper, we will apply the tools from number theory and modular forms to the study of the Seiberg-Witten theory. We will express the holomorphic functions $a, a_D$, which generate the lattice $Z=n_e a+n_m a_D, (n_e, n_m) \in \mathbb{Z}^2$ of central charges, in terms of the periods of the Legendre family of elliptic curves. Thus we will be able to compute the transformations of the quotient $a_D/a$ under the action of the modular group $\text{PSL}(2,\mathbb{Z})$. We will show the Schwarzian derivative of the quotient $a_D/a$ with respect to the complexified coupling constant is given by the theta functions. We will also compute the scalar curvature of the moduli space of the $N=2$ supersymmetric Yang-Mills theory, which is shown to be asymptotically flat near the perturbative limit.

\end{abstract}

\maketitle
\setcounter{tocdepth}{1}
\vspace{-13pt}
\vspace{-13pt}

\vspace*{0.2in}

\tableofcontents
\section{Introduction}

In the fundamental paper \cite{SeibergWitten} by Seiberg and Witten, the low-energy effective action of the $N=2$ supersymmetric Yang-Mills theory with gauge group $\text{SU}(2)$ has been determined by using the holomorphic properties of the theory together with their monodromy behaviors near certain singularities. Their beautiful work is now called the Seiberg-Witten theory. In this paper, we will apply the tools from number theory, e.g. modular lambda function, Schwarzian derivative and Great Picard's Theorem, to study the Seiberg-Witten theory, which provide new insights into the theory.

First, let us briefly introduce some well-known results of Seiberg-Witten theory that will be needed in this paper. This section is never meant to be a review of Seiberg-Witten theory, not even a brief one. So we will only introduce the results, while leave all the details to the original paper \cite{SeibergWitten}. The readers who are not familiar with the $N=2$ supersymmetric Yang-Mills theory can also consult the review papers \cite{AH, Bilal}.

 The $N=2$ SUSY multiplet contains a massless gauge field $A_\mu$, two Weyl fermions $\lambda$ and $\psi$, and a scalar field $\phi$, all of which live in the adjoint representation of the gauge group $\text{SU}(2)$. The classical potential of the pure $N=2$ theory is given by
\begin{equation}
V(\phi)=\frac{1}{2} \text{Tr}[\phi, \phi^\dagger].
\end{equation}
Unbroken SUSY will require that we have 
\begin{equation}
V(\phi) =0
\end{equation}
in the vacuum. But this does not mean that $\phi$ itself is zero, since it is sufficient that if $\phi$ and $\phi^\dagger$ commute. 

Up to a gauge transformation, the scalar field $\phi$ can be taken to be
\begin{equation}
\phi =\frac{1}{2} a \sigma_3,
\end{equation}
where $\sigma_3$ is the third Pauli matrix. For a given vacuum, $a$ must be a constant. But a gauge transformation from the Weyl group of $\text{SU}(2)$ can still transform $a$ to $-a$, so they are gauge equivalent. Hence the gauge invariant quantity that parametrizes the inequivalent vacua is
\begin{equation}
u = \frac{1}{2}a^2=\text{Tr} \,\phi^2.
\end{equation}
The space of gauge inequivalent vacua is also called the moduli space $\mathscr{M}$ of the $N=2$ theory, which has $u$ as its coordinate. Naively, one can consider $\mathscr{M}$ as the complex plane $\mathbb{C}$. However, this moduli space $\mathscr{M}$ has special singularities, and the behaviors of the $N=2$ theory near these singularities will determine the low-energy effective action \cite{SeibergWitten}.

More explicitly, the low energy effective Lagrangian of the $N=2$ supersymmetric Yang-Mills theory is completely determined by a prepotential $\mathcal{F}(a)$, which is a multi-valued holomorphic function. The effective complexified coupling constant of the $N=2$ theory is given by
\begin{equation}
\tau_{sw}(a) = \frac{\partial^2 \mathcal{F}}{\partial a^2}.
\end{equation} 
While a natural choice of metric for the moduli space $\mathscr{M}$ is 
\begin{equation} \label{eq:metriconModuliSpace}
(ds)^2=\text{Im} \,\tau_{sw}(a) \, da d \overline{a},
\end{equation}
which works at least in a small neighborhood of the perturbative limit $u=\infty$. The dual variable $a_D$ is given by
\begin{equation}
a_D= \partial \mathcal{F} /\partial a,
\end{equation}
and the complexified coupling constant satisfies
\begin{equation}
\tau_{sw}(a) =\frac{d a_D}{d a}.
\end{equation}
In particular, the metric $(ds)^2$ on the moduli space can also be expressed as 
\begin{equation} \label{eq:metriconmodulispace}
(ds)^2 =\text{Im} \,da_D d\overline{a} =-\frac{i}{2} \left( da_D d\overline{a} -da d\overline{a}_D \right).
\end{equation}

The key insight of Seiberg and Witten is that the multi-valued holomorphic functions $a(u)$ and $a_D(u)$ can be realized as the period integrals of a meromorphic oneform defined for a family of elliptic curves \cite{SeibergWitten}. More explicitly, let $\mathcal{E}_u$ be the family of elliptic curves defined by the cubic equation
\begin{equation} \label{eq:swfamilyelliptic}
\mathcal{E}_u:y^2=(x-1)(x+1)(x-u).
\end{equation}
On every elliptic curve $\mathcal{E}_u$, there is a meromorphic oneform $\Omega^{sw}_u$, called the Seiberg-Witten oneform, that is defined by
\begin{equation} \label{eq:swoneformU}
\Omega^{sw}_u=\frac{\sqrt{2}}{2 \pi} \left( \frac{x dx}{y} -\frac{u dx}{y} \right).
\end{equation}
The residue of $\Omega^{sw}_u$ at its pole is zero, therefore it defines an element of the cohomology group $H^1(\mathcal{E}_u,\mathbb{C})$ \cite{SeibergWitten}. The underlying differentiable manifold of $\mathcal{E}_u$ is the torus. From the paper \cite{SeibergWitten}, there exists a basis $\{ \gamma^{sw}_0, \gamma^{sw}_1 \}$ for the homology group $H_1(\mathcal{E}_u,\mathbb{Z})$ such that 
\begin{equation} \label{eq:aaDperiods}
a(u)=\int_{\gamma^{sw}_0} \Omega^{sw}_u,~a_D(u)=\int_{\gamma^{sw}_1}\,\Omega^{sw}_u.
\end{equation}
In fact, the mass-lattice of the BPS states of the $N=2$ theory is given by
\begin{equation} \label{eq:BPSmasslattice}
\{n_e a+n_m a_D:(n_e,n_m) \in \mathbb{Z}^2 \}.
\end{equation}
Furthermore, the complexified coupling constant $\tau_{sw}(u)$ is just the period of the elliptic curve $\mathcal{E}_u$ that is given by
\begin{equation} \label{eq:CCCu}
\tau_{sw}(u) =\frac{d a_D/du}{d a/du}.
\end{equation}
The quotient $a_D/a$ plays an important role in the Seiberg-Witten theory, especially in the study of the wall-crossing of BPS states \cite{AH, Bilal, SeibergWitten}.

The motivation of this paper is that since the transformation
\begin{equation} \label{eq:ellTransformation}
x \mapsto 2x-1,~y\mapsto 2 \sqrt{2}\,y,~u \mapsto 2 \lambda-1,
\end{equation}
sends the family \eqref{eq:swfamilyelliptic} to the Legendre family of elliptic curves
\begin{equation}
\mathscr{E}_{\lambda}: y^2=x(x-1)(x-\lambda),
\end{equation}
could the tools in number theory provide any new insights into the Seiberg-Witten theory?

We will see the answer is yes! The most significant result we have obtained through this approach is that the Schwarzian derivative of the quotient $a_D/a$ with respect to the complexified coupling constant $\tau_{sw}$ is given by the theta functions
\begin{equation} \label{eq:swperiodSchwarzian}
\{a_D/a,  \tau_{sw}\} =\frac{\pi^2}{2} \theta^8_4(0,e^{\pi i \tau_{sw}}).
\end{equation}
Recall that the Schwarzian derivative of a holomorphic function $f(z)$ with respect to the variable $z$ is defined by
\begin{equation}
\{f, z \}=\left( \frac{f''(z)}{f'(z)} \right)' -\frac{1}{2}\left( \frac{f''(z}{f'(z)} \right)^2 =\frac{f'''(z)}{f'(z)} -\frac{3}{2}\left( \frac{f''(z)}{f'(z)} \right)^2.
\end{equation}
The Schwarzian derivative has many important properties \cite{Hille, SS}, e.g.
\begin{enumerate}

\item It is invariant under the M\"obius transformation
\begin{equation} \label{eq:SDMobiusTransformation}
\left\{\frac{p f+q}{r f +s}, z \right\} =\{f,  z\},\,\,\,\,
\left(
\begin{array}{cc}
 p & q \\
 r & s \\
\end{array}
\right) \in \text{GL}(2,\mathbb{C}).
\end{equation}

\item If $w$ is a function of $z$, then we have
\begin{equation} \label{eq:SDcocycle}
\{f,z\} =\{f, w \} \left( \frac{dw}{dz} \right)^2 + \{w, z\}.
\end{equation}

\item If $w$ is related to $z$ by a M\"obius transformation
\begin{equation}
w=\frac{pz+q}{rz+s},~
\left(
\begin{array}{cc}
 p & q \\
 r & s \\
\end{array}
\right) \in \text{GL}(2,\mathbb{C}),
\end{equation}
then we have
\begin{equation}\label{eq:SDvariabletransformation2}
\{f,z\} =\{ f,w\} \frac{(ps-qr)^2}{(rz+s)^4}.
\end{equation}

\item Given two holomorphic functions $f(z)$ and $g(z)$, we have
\begin{equation} \label{eq:SDvariableTransformation3}
\{ f, z\}=\{ g,z\} \iff f(z)=\frac{p g(z)+q}{r g(z) +s}\,\,\,\text{for some }
\left(
\begin{array}{cc}
 p & q \\
 r & s \\
\end{array}
\right) \in \text{GL}(2,\mathbb{C}).
\end{equation}
\end{enumerate}
These properties of the Schwarzian derivative immediately tell us that the equation \eqref{eq:swperiodSchwarzian} is an essential equation that characterize the central charge lattice \eqref{eq:BPSmasslattice}, and more importantly the Seiberg-Witten theory. This equation prompts us to ask whether there exists an automorphic representation that can be constructed from the the quotient $a_D/a$ which is related to the counting of the BPS states of the $N=2$ supersymmetric Yang-Mills theory. It is also very interesting to find out whether the equation \eqref{eq:swperiodSchwarzian} admits any physical interpretations.

In this paper, we will also apply the Great Picard's Theorem to study the behaviors of the quotient $a_D/a$ near the non-perturbative limit $u= \pm 1$, based on which we prove that the quotient $a_D/a$ indeed can take real values. Even though this is already a well-known property, but a mathematical proof could at least serve as an assurance. We will also compute the scalar curvature of the metric \eqref{eq:metriconModuliSpace} on the moduli space $\mathscr{M}$. We will show that near the perturbative limit $u=\infty$, the scalar curvature approaches 0 exponentially, while near the non-perturbative limits $u=\pm 1$, the scalar curvature approaches infinity exponentially. Therefore near the perturbative limit $u=\infty$, the moduli space $\mathscr{M}$ is asymptotically flat.

The outline of this paper is as follows. In Section \ref{sec:legendrefamily}, we will review some well-known results of the Legendre family of elliptic curves, e.g. its periods and the Picard-Fuchs equation. In Section \ref{sec:modularlambdafunction}, we will discuss the properties of the modular lambda function and the theta functions. In Section \ref{sec:seibergwittenperiods}, we will look at the meromorphic Seiberg-Witten oneform and its periods. We will find out their relations to the periods of the Legendre family of elliptic curves. In Section \ref{sec:SchwarzianDerivative}, we will study the transformations of the periods of the meromorphic Seiberg-Witten oneform under the action of the modular group $\text{PSL}(2,\mathbb{Z})$. We will show the Schwarzian derivative of the quotient $a_D/a$ with respect to the complexified coupling constant $\tau_{sw}$ is given by the theta functions. In Section \ref{sec:greatpicards}, we will apply the Great Picard's Theorem to study the behaviors of the quotient $a_D/a$ near the non-perturbative limit $u=\pm 1$. In Section \ref{sec:metric}, we will compute the scalar curvature for the moduli space $\mathscr{M}$ of the $N=2$ theory. In Section \ref{sec:conclusion}, we will summarize the results of this paper and raise several interesting open questions.

\section{The Legendre family of elliptic curves} \label{sec:legendrefamily}

In this section, we will review some well-known results about the Legendre family of elliptic curves, which are very elementary. They are included here solely to make this paper as self-contained as possible.

The Legendre family of elliptic curves is defined by the cubic equation
\begin{equation} \label{eq:legendrefamilyequation}
\mathscr{E}_{\lambda}: y^2=x(x-1)(x-\lambda),
\end{equation}
where $\lambda$ is a free parameter. The singular fibers of this family are over the points $\lambda=0,1,\infty$. When $\lambda$ is 0 or 1, this family degenerates to a nodal cubic. Let us now briefly recall how to construct a complex curve from the equation \eqref{eq:legendrefamilyequation} using branch cuts  \cite{CarlsonMullerPeters}. First, cut the complex plane $\mathbb{C}$ along two lines: 0 to $\lambda$ and 1 to $\infty$. Second, take a second copy of the complex plane $\mathbb{C}$ and cut it along the same lines. Then, glue the two copies of the complex planes along the corresponding branch cuts. The result is a smooth torus, which has a complex structure induced by the cubic equation \eqref{eq:legendrefamilyequation}.

On every smooth elliptic curve $\mathscr{E}_\lambda$, there exists a nowhere vanishing oneform $\Omega_\lambda$ which on the open locus where $y \neq 0$ is defined by
\begin{equation} \label{eq:LengendreOneform}
\Omega_\lambda=\frac{dx}{2\pi y}.
\end{equation}
It is a straightforward exercise to check that $dx/(2\pi y)$ actually extends to a nowhere vanishing oneform on $\mathscr{E}_{\lambda}$ \cite{CarlsonMullerPeters}. We now construct a basis $\{\gamma_0, \gamma_1 \}$ for the homology group $H_1(\mathscr{E}_{\lambda},\mathbb{Z})$, which is isomorphic to $\mathbb{Z}^2$. Let $\gamma_0$ be a cycle on one copy of $\mathbb{C}$ that encircles the branch cut $(1,\infty)$; while let $\gamma_1$ be a circle that is the composite of two lines: the line from $1$ to $\lambda$ on the first copy of $\mathbb{C}$ and the line from $\lambda$ to $1$ on the second copy of $\mathbb{C}$. The integration of the oneform $\Omega_\lambda$ over the basis $\{\gamma_0, \gamma_1 \}$ gives us two periods $\varpi_0(\lambda)$ and $\varpi_1(\lambda)$
\begin{equation} \label{eq:Omega_lambdaIntegration}
\varpi_0(\lambda)= \int_{\gamma_0} \Omega_\lambda,~\varpi_1(\lambda)=\int_{\gamma_1} \Omega_\lambda,
\end{equation} 
which are multi-valued holomorphic functions. The computations of $\varpi_0(\lambda)$ and $\varpi_1(\lambda)$ are certainly very well-known and have a very long history. Up to a sign, they are given by the integrals
\begin{equation} \label{eq:periodsintegration}
\begin{aligned}
\varpi_0(\lambda)&=\frac{1}{\pi}\int_{1}^{\infty} \frac{dx}{\sqrt{x(x-1)(x-\lambda)}},\\
\varpi_1(\lambda)&=\frac{1}{\pi}\int_{1}^{\lambda}\, \frac{dx}{\sqrt{x(x-1)(x-\lambda)}}.
\end{aligned}
\end{equation}
After a change of variable by $x=1/z$, the first integral in formula \eqref{eq:periodsintegration} becomes
\begin{equation} \label{eq:periodZintegration}
\varpi_0(\lambda)=\frac{1}{\pi}\int_0^1 \frac{dz}{\sqrt{z(1-z)(1-\lambda z)}}.
\end{equation}
When $\lambda$ is in a small neighborhood of $0$, the factor $(1-\lambda z)^{-1/2}$ admits a power series expansion in $\lambda z$. Then the integration \eqref{eq:periodZintegration} becomes 
\begin{equation}
\varpi_0(\lambda)=\sum_{n=0}^\infty \binom{-1/2}{n}^2 \lambda^n,
\end{equation}
which is the hypergeometric function $_2F_1(1/2,1/2;1; \lambda)$ \cite{Chandra}. The period $\varpi_1(\lambda)$ can be computed similarly, and it admits an expansion with leading term given by
\begin{equation}
\varpi_1(\lambda)=-\frac{1}{\pi i} \int_{\lambda}^1 \frac{dx}{\sqrt{x(1-x)(x-\lambda)}}=-\frac{1}{\pi i}(4 \log 2- \log \lambda)+ \cdots,
\end{equation}
where the limit of the terms in $\cdots$ are zero when $\lambda \rightarrow 0$. Thus, we deduce that the period $\varpi_1(\lambda)$ must be of the form
\begin{equation}
\varpi_1(\lambda)=\frac{1}{\pi i} (\varpi_0(\lambda) \log \lambda+h(\lambda))-\frac{\log 16}{\pi i} \varpi_0(\lambda),
\end{equation}
where $h(\lambda)$ is a holomorphic function in a small neighborhood of $\lambda=0$ and it admits a power series expansion of the form \cite{CarlsonMullerPeters}
\begin{equation}
h(\lambda) =\frac{1 }{2}\,\lambda+\frac{21 }{64}\,\lambda^2+\frac{185 }{768}\,\lambda^3+ \cdots.
\end{equation}

The derivative of the oneform $\Omega_\lambda$ with respect to $\lambda$ is given by
\begin{equation}
\frac{d\,\Omega_\lambda}{d\lambda}=\frac{1}{4\pi} \frac{dx}{\sqrt{x(x-1)(x-\lambda)^3}},
\end{equation}
which is a meromorphic oneform on $\mathscr{E}_\lambda$ with a pole at the point $(x=\lambda,y=0)$. However, its residue at this pole is zero, hence it defines an element of $H^1(\mathscr{E}_{\lambda},\mathbb{C})$ through integration. Moreover, $\{\Omega_\lambda, \Omega'_\lambda \}$ form a basis for the two dimensional vector space $H^1(\mathscr{E}_{\lambda},\mathbb{C})$. The second derivative of $\Omega_\lambda$, i.e. $d^2\Omega/d\lambda^2$, also defines an element of $H^1(\mathscr{E}_{\lambda},\mathbb{C})$, hence it must be a linear combination of $\Omega_\lambda$ and $\Omega'_\lambda$. From the Chapter 1 of \cite{CarlsonMullerPeters}, the oneform $\Omega_\lambda$ satisfies a second order ordinary differential equation (ODE)
\begin{equation} \label{eq:picardfuchsequation}
\lambda(1-\lambda)\,\frac{d^2 \Omega}{d\lambda^2} +(1-2\lambda) \,\frac{d \,\Omega}{d \lambda}-\frac{1}{4} \,\Omega =0,
\end{equation}
which is called the Picard-Fuchs equation of $\Omega_\lambda$. Therefore, the periods $\varpi_i(\lambda), i=0,1$ also satisfy this ODE
\begin{equation} \label{eq:periodsPFequation}
\lambda(1-\lambda)\,\varpi''_i(\lambda) +(1-2\lambda)\, \varpi'_i(\lambda)-\frac{1}{4}\, \varpi_i(\lambda) =0, ~i=0,1.
\end{equation}
In fact, $\{\varpi_0(\lambda),\varpi_1(\lambda)\}$ form a basis for the solution space of the Picard-Fuchs equation \eqref{eq:picardfuchsequation}.

\section{ The modular lambda function and theta functions} \label{sec:modularlambdafunction}

In this section, we will review the properties of the modular lambda function, the periods $\varpi_i(\lambda), i=0,1$ and the theta functions. All the results in this section have been known since the 19th century, and the readers who are familiar with them can skip this section completely.

From the computations in Section \ref{sec:legendrefamily}, the monodromy of the periods $\{\varpi_0(\lambda), \varpi_1(\lambda)\}$ near the singular point $\lambda=0$ is
\begin{equation}
\varpi_0(\lambda) \rightarrow \varpi_0(\lambda), ~\varpi_1(\lambda) \rightarrow \varpi_1(\lambda)+2\, \varpi_0(\lambda).
\end{equation}
Let the column vector $\varpi$ be
\begin{equation}
\varpi=(\varpi_0,\varpi_1)^\top,
\end{equation}
then the monodromy matrix of $\varpi$ near  $\lambda=0$ is 
\begin{equation} \label{eq:monodromyaroundlambd0}
T_0=
\begin{pmatrix}
 1 & 0  \\
 2 & 1  \\
\end{pmatrix}.
\end{equation}
The monodromy matrices of the period vector $\varpi$ near the three singular points $\{0,1,\infty\}$ generate the modular group $\Gamma(2)$. The readers are referred to the book \cite{CarlsonMullerPeters} for more details. 

The period $\tau$ of the elliptic curve $\mathscr{E}_\lambda$ is given by the quotient
\begin{equation}\label{eq:canonicalTau}
\tau = \frac{\varpi_1(\lambda) }{\varpi_0(\lambda) }.
\end{equation}
In a small neighborhood of $\lambda=0$, $\tau$ has an explicit expansion of the form
\begin{equation} \label{eq:defnoftau}
\tau(\lambda)=\frac{1}{\pi i}\, \left(\log \lambda+\frac{h(\lambda) }{\varpi_0(\lambda) } \right)-\frac{\log 16}{\pi i}.
\end{equation}
The elliptic curve $\mathscr{E}_{\lambda}$ is isomorphic to the quotient of $\mathbb{C}$ by the lattice generated by $\{1, \tau \}$ \cite{CarlsonMullerPeters}. The inverse of $\tau(\lambda)$, i.e. $\lambda(\tau)$, is the famous modular lambda function. Moreover, $\lambda(\tau)$ generates the function field of the modular curve $X(2)$, i.e. it is a Hauptmodul for $X(2)$ \cite{Chandra}. The index of $\Gamma(2)$ in $\text{PSL}(2,\mathbb{Z})$ is 6 and the quotient group $\text{PSL}(2,\mathbb{Z})/\Gamma(2)$ is generated by 
\begin{equation}
T:\tau \mapsto \tau +1,~S:\tau \mapsto -\frac{1}{\tau}.
\end{equation}
The modular curve $X(2)=\Gamma(2)\backslash \mathbb{H}$ is a sixfold cover of $\text{PSL}(2,\mathbb{Z}) \backslash \mathbb{H}$ \cite{Diamond}.

Let the variable $q$ be defined by
\begin{equation}
q:=\exp \pi i \tau.
\end{equation}
From formula \eqref{eq:defnoftau}, we immediately obtain
\begin{equation} \label{eq:qintermsoflambda}
q=\frac{1}{16} \,\lambda\, \exp \left(h(\lambda)/\varpi_0(\lambda)\right).
\end{equation}
This equation can be inverted order by order, and the result is the power series expansion of $\lambda$ with respective to $q$ \cite{CarlsonMullerPeters}
\begin{equation} \label{eq:modularLambdafunction}
\lambda(q)=16 q-128 q^2 +704 q^3 -3072 q^4  + \cdots.
\end{equation}
The modular lambda function can also be expressed as a product of the theta functions \cite{Chandra}
\begin{equation} \label{eq:lambdaandtheta}
\lambda(q)=\frac{\theta_2^4(0,q)}{\theta_3^4(0,q)}, \,1-\lambda(q)=\frac{\theta_4^4(0,q)}{\theta_3^4(0,q)};
\end{equation}
where we have used the Jacobi identity
\begin{equation}
\theta_3^4(0,q)=\theta_2^4(0,q)+\theta_4^4(0,q).
\end{equation}

Following the mirror symmetry of Calabi-Yau threefolds \cite{MarkGross,KimYang}, let us define the normalized Yukawa coupling $\mathcal{Y}$ by
\begin{equation}
\mathcal{Y}:=\frac{1}{\varpi_0^2(\lambda) }\,\Big(\int_{\mathscr{E}_{\lambda}} \Omega \wedge \frac{d \Omega}{d \lambda}\Big) \,d \lambda,
\end{equation}
which is a oneform defined on $\mathbb{C}-\{0,1 \}$. In fact, the Yukawa coupling $\mathcal{Y}$ is just $d\tau$
\begin{equation} \label{eq:yukawacouplingdtau}
\mathcal{Y}=\frac{1}{\varpi_0^2(\lambda)}\,\left(\varpi_0(\lambda) \frac{d\varpi_1(\lambda)}{d\lambda}-\varpi_1(\lambda) \frac{d \varpi_0(\lambda)}{d \lambda} \right)\,d \lambda=\frac{d(\varpi_1(\lambda)/\varpi_0(\lambda))}{d\lambda} \,d\lambda=d \tau.
\end{equation}
Now define $W_k$ to be
\begin{equation}
W_k =\int_{\mathscr{E}_{\lambda}} \Omega \wedge \frac{d^k \Omega}{d \lambda^k}, ~k=0,1,2, \cdots.
\end{equation}
From the definition, we immediately have
\begin{equation}
W_0=0,~W_2 =\frac{dW_1}{d\lambda}.
\end{equation}
The Picard-Fuchs equation \eqref{eq:picardfuchsequation} implies
\begin{equation}
\lambda(\lambda-1) \frac{dW_1}{d\lambda}+(2 \lambda-1)\,W_1=0,
\end{equation}
a general solution of which is of the form
\begin{equation}
W_1 = \frac{C}{\lambda (1-\lambda)},
\end{equation}
Here $C$ is a nonzero constant, which can be shown to be $1/\pi i$ by an explicit computation using the series expansion of $\varpi_i(\lambda),~i=0,1$. Hence the normalized Yukawa coupling $\mathcal{Y}$ is also equal to
\begin{equation}
\mathcal{Y} = \frac{1}{\pi i }\, \frac{d \lambda}{\varpi_0^2(\lambda) \,\lambda(1-\lambda)}.
\end{equation}
Then formula \eqref{eq:yukawacouplingdtau} immediately implies
\begin{equation} \label{eq:piidtaudlambda}
\pi i \,\frac{d\tau}{d\lambda}=\frac{1}{\varpi_0^2(\lambda) \,  \lambda (1-\lambda) },
\end{equation}
which is equivalent to
\begin{equation} \label{eq:dlambdadtau}
\frac{1}{\pi i}\,\frac{d\lambda}{d \tau}=\varpi_0^2(\lambda)\,\lambda (1-\lambda).
\end{equation}

Let us now prove a well-know result \cite{Zagier, ZagierAT}.
\begin{lemma}
\begin{equation}
\varpi_0(\lambda(q))=\theta_3^2(0,q).
\end{equation}
\end{lemma}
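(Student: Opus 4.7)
The strategy is to reduce the identity at the level of squares using equation \eqref{eq:dlambdadtau}, which already expresses $\varpi_0^2(\lambda)$ in a closed form amenable to theta-function substitution. Concretely, substituting the formulas $\lambda = \theta_2^4/\theta_3^4$ and $1-\lambda = \theta_4^4/\theta_3^4$ from \eqref{eq:lambdaandtheta} into \eqref{eq:dlambdadtau} transforms the claim $\varpi_0(\lambda(q)) = \theta_3^2(0,q)$ into the equivalent assertion
\[
\frac{1}{\pi i}\,\frac{d\lambda}{d\tau} \;=\; \frac{\theta_2^4(0,q)\,\theta_4^4(0,q)}{\theta_3^4(0,q)}.
\]
Thus the content of the lemma is packaged entirely inside this classical theta identity.

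Second, I would verify the theta identity by logarithmic differentiation. Writing $\log \lambda = 4\bigl(\log \theta_2(0,q) - \log \theta_3(0,q)\bigr)$ and using $dq/d\tau = \pi i\, q$, the left-hand side becomes a combination of the logarithmic derivatives $\theta_j'(0,q)/\theta_j(0,q)$ in $\tau$. These logarithmic derivatives are classical: they follow from the heat equation $\partial_\tau \theta_j = (4\pi i)^{-1}\partial_z^2 \theta_j$ combined with Jacobi's derivative identity $\theta_1'(0,q) = \pi\,\theta_2(0,q)\theta_3(0,q)\theta_4(0,q)$, and together with the Jacobi identity $\theta_3^4 = \theta_2^4 + \theta_4^4$ they yield the required formula. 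Since the necessary machinery is entirely nineteenth-century and already appears in \cite{Chandra}, this step can be quoted rather than reproved.

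Third, having established that $\varpi_0^2(\lambda(q)) = \theta_3^4(0,q)$ as holomorphic functions of $q$ near $q=0$, I would pin down the sign by comparing the initial values: $\varpi_0(\lambda) \to 1$ as $\lambda \to 0$ (from the hypergeometric expansion $\varpi_0 = {}_2F_1(1/2,1/2;1;\lambda)$), while $\theta_3^2(0,q) \to 1$ as $q \to 0$ from the $q$-series $\theta_3(0,q) = 1 + 2q + 2q^4 + \cdots$. Both sides are holomorphic at the origin with the same value, hence the positive square root of the identity holds in a neighborhood of $q = 0$, and analytic continuation propagates the equality to the full domain of definition.

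\textbf{Main obstacle.} The only genuinely non-trivial ingredient is the theta identity $\frac{1}{\pi i}d\lambda/d\tau = \theta_2^4\theta_4^4/\theta_3^4$; everything else is bookkeeping. This identity is classical but has no short self-contained derivation from scratch, so the most honest approach in a paper aiming for self-containedness is either to present it as a consequence of the heat equation and Jacobi's derivative formula, or to cite it from \cite{Chandra}.
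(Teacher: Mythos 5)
Your proposal is correct, but it takes a genuinely different route from the paper. You reduce the lemma, via \eqref{eq:dlambdadtau} and \eqref{eq:lambdaandtheta}, to the classical identity $\frac{1}{\pi i}\,d\lambda/d\tau=\theta_2^4\theta_4^4/\theta_3^4$, which you then verify (or cite) through logarithmic differentiation, the heat equation and Jacobi's derivative formula $\theta_1'(0)=\pi\theta_2\theta_3\theta_4$, and finally you fix the sign of the square root by matching constant terms at $q=0$ and analytically continuing. The paper instead argues by modular invariance: starting from the same equation \eqref{eq:varpi0squarelambda}, it computes how $\varpi_0^2(\lambda)$ transforms under the generators $T$ and $S$ of $\mathrm{PSL}(2,\mathbb{Z})$, observes that $\theta_3^4(0,q)$ transforms identically (quoting the theta transformation laws from \cite{Chandra}), concludes that the ratio $\theta_3^4/\varpi_0^2$ is a holomorphic $\mathrm{PSL}(2,\mathbb{Z})$-invariant function on $\mathbb{H}$, and evaluates it to be $1$ from the $q$-expansion $\theta_3^2/\varpi_0=1+O(q^4)$ at the cusp. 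The trade-off is which piece of nineteenth-century theta lore you import: your argument needs the heat equation and Jacobi's derivative identity but is entirely local near $q=0$ plus analytic continuation, while the paper's needs the modular transformation law $\theta_3(0|-1/\tau)=\sqrt{-i\tau}\,\theta_3(0|\tau)$ and the (implicit, slightly glossed-over) fact that a holomorphic modular-invariant function bounded at the cusp is constant. Both are sound; yours makes the analytic content ($d\lambda/d\tau$ as a theta quotient) more explicit, whereas the paper's makes the structural reason for the identity (identical automorphy factors) more visible.
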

\begin{proof}
From formula \eqref{eq:dlambdadtau}, we have
\begin{equation}\label{eq:varpi0squarelambda}
\varpi_0^2(\lambda)=\frac{1}{\pi i}\,\frac{d\lambda}{d \tau} \,\frac{1}{\lambda(1-\lambda)}.
\end{equation}
The actions of the two generators $T$ and $S$ of $\text{PSL}(2, \mathbb{Z})$ on $\lambda$ are given by
\begin{equation}
\begin{aligned}
T:&\,\lambda \mapsto \frac{\lambda}{\lambda-1}, \\
S:&\,\lambda \mapsto 1-\lambda.
\end{aligned}
\end{equation}
From formula \eqref{eq:varpi0squarelambda}, the actions of $T$ and $S$ on $\varpi_0^2(\lambda)$ are given by
\begin{equation}
\begin{aligned}
T:&\,\varpi_0^2(\lambda) \mapsto (1-\lambda)\,\varpi_0^2(\lambda), \\
S:&\,\varpi_0^2(\lambda) \mapsto -\tau^2\,\varpi_0^2(\lambda).
\end{aligned}
\end{equation}
But the actions of $T$ and $S$ on the $\theta^4_3(0,q)$ are \cite{Chandra}
\begin{equation}
\begin{aligned}
T:&\,\theta^4_3(0,q) \mapsto \theta^4_4(0,q)=(1-\lambda)\,\theta^4_3(0,q), \\
S:&\,\theta^4_3(0,q) \mapsto -\tau^2\,\theta^4_3(0,q);
\end{aligned}
\end{equation}
where we have used formula \eqref{eq:lambdaandtheta}. Since $\varpi_0(\lambda(q))$ does not vanish on the upper half plane, the quotient $\theta^4_3(0,q)/\varpi_0^2(\lambda(q))$ is a holomorphic function on $\mathbb{H}$ that is invariant under the actions of $\text{PSL}(2,\mathbb{Z})$. Near the cusp $i\infty$, the $q$-expansion of $\theta^2_3(0,q)/\varpi_0(\lambda(q))$ is given by
\begin{equation}
\theta^2_3(0,q)/\varpi_0(\lambda(q))=1+O(q^4),
\end{equation}
hence we deduce $\theta^2_3(0,q)/\varpi_0(\lambda(q))$ must be $1$.
\end{proof}

\noindent The upshot is that we have the following identities
\begin{equation} \label{eq:thetafnandlambdavarpi0}
\theta^4_2(0,q)=\lambda\,\varpi_0^2(\lambda(q)),\,\theta_3^2(0,q)=\varpi_0(\lambda(q)),\,\theta^4_4(0,q)=(1-\lambda)\,\varpi_0^2(\lambda(q)),
\end{equation}
which will be important in this paper.

\section{The meromorphic Seiberg-Witten oneform and its periods} \label{sec:seibergwittenperiods}

In this section, we will study the meromorphic Seiberg-Witten oneform $\Omega^{sw}_u$ \eqref{eq:swoneformU} and its periods. Under the transformation \eqref{eq:ellTransformation} which sends the family $\mathcal{E}_u$ \eqref{eq:swfamilyelliptic} to the Legendre family $\mathscr{E}_\lambda$ \eqref{eq:legendrefamilyequation} of elliptic curves, the meromorphic Seiberg-Witten oneform $\Omega^{sw}_u$ \eqref{eq:swoneformU} is sent to
\begin{equation} \label{eq:swoneformxdxovery}
\Omega^{sw}_\lambda=\frac{1}{\pi}\left( \frac{xdx}{y}-\frac{\lambda dx}{y} \right).
\end{equation}
Notice that $\Omega^{sw}_\lambda$ can also be expressed as
\begin{equation} \label{eq:swmeroomega}
\Omega^{sw}_\lambda=2\,(x \, \Omega_\lambda -\lambda \, \Omega_\lambda),
\end{equation}
where $\Omega_\lambda$ is the nowhere vanishing oneform \eqref{eq:LengendreOneform} on the elliptic curve $\mathscr{E}_\lambda$.

Let us now show that up to a total differential, $\Omega^{sw}_\lambda$ is equal to
\begin{equation} \label{eq:SWoneformLambda}
\Omega^{sw}_\lambda= 4\, \lambda (\lambda-1)\, \Omega'_\lambda + \frac{2}{ \pi} \,df,
\end{equation}
where $f$ is a meromorphic function on the elliptic curve $ \mathscr{E}_\lambda$
\begin{equation}
f:=x^{1/2}(x-1)^{1/2}(x-\lambda)^{-1/2}.
\end{equation}
The total differential of $f$ is
\begin{equation} 
\begin{aligned}
df=&\frac{1}{2}\, x^{-1/2}(x-1)^{1/2}(x-\lambda)^{-1/2}dx+\frac{1}{2} \, x^{1/2}(x-1)^{-1/2}(x-\lambda)^{-1/2}dx \\
      &-\frac{1}{2} \, x^{1/2}(x-1)^{1/2}(x-\lambda)^{-3/2}dx,
\end{aligned}
\end{equation}
whose residues at the singularities are all 0. The oneform $\Omega_\lambda$ and its derivative $\Omega'_\lambda$ are
\begin{equation}
\begin{aligned}
\Omega_\lambda&=\frac{1}{2 \pi} \, x^{-1/2}(x-1)^{-1/2}(x-\lambda)^{-1/2} \,dx, \\
\Omega'_\lambda&=\frac{1}{4 \pi } \,x^{-1/2}(x-1)^{-1/2}(x-\lambda)^{-3/2} \, dx, 
\end{aligned}
\end{equation}
from which we deduce
\begin{equation} \label{eq:dfomega}
\frac{1}{2 \pi} \,df=\frac{1}{2}\,(x-1)\, \Omega_\lambda + \frac{1}{2}\,x\, \Omega_\lambda-x\,(x-1)\, \Omega'_\lambda.
\end{equation}
To simplify this formula, we will need the identity
\begin{equation}
(x-\lambda)\, \Omega'_\lambda= \frac{1}{2} \,\Omega_\lambda,
\end{equation}
which tells us
\begin{equation}
x\,(x-1)\, \Omega'_\lambda=\frac{1}{2}\, x\, \Omega_\lambda +(\lambda-1)(\frac{1}{2} \,\Omega_\lambda + \lambda\, \Omega'_\lambda).
\end{equation}
Hence formula \eqref{eq:dfomega} becomes
\begin{equation}
\frac{1}{2 \pi} \,df=\frac{1}{2}\, x \,\Omega_\lambda-\frac{1}{2}\, \lambda\, \Omega_\lambda-\lambda(\lambda-1) \,\Omega'_\lambda,
\end{equation}
from which we immediately obtain the formula \eqref{eq:SWoneformLambda}.

The integrals of $\Omega^{sw}_\lambda$ with respect to the homology basis $\{ \gamma_0,\gamma_1 \}$ of $\mathscr{E}_\lambda$ constructed in Section \ref{sec:legendrefamily} are given by
\begin{equation} \label{eq:swperiodsdefinition}
\begin{aligned}
\pi_0(\lambda)&=\int_{\gamma_0}\,4 \,\lambda(\lambda-1) \,\Omega'_\lambda=4 \,\lambda(\lambda-1) \,\varpi'_0(\lambda), \\
\pi_1(\lambda)&=\int_{\gamma_1}\,4 \,\lambda(\lambda-1) \,\Omega'_\lambda= 4\,\lambda(\lambda-1) \,\varpi'_1(\lambda),
\end{aligned}
\end{equation}
where we have used formula \eqref{eq:Omega_lambdaIntegration}. From the Picard-Fuchs equation \eqref{eq:picardfuchsequation}, we deduce
\begin{equation} \label{eq:BPSperiodsDerivative}
(\pi_0)'=-\varpi_0,~(\pi_1)'=-\varpi_1.
\end{equation}
The periods $\pi_i(\lambda),~i=0,1$ will be called the BPS periods in this paper. Since the periods $\{a_D,a\}$ \eqref{eq:aaDperiods} are given by the integrals of the Seiberg-Witten oneform $\Omega^{sw}_u$ \eqref{eq:swoneformU} over an integral basis of $H_1(\mathcal{E}_u,\mathbb{Z})$, hence they are related to $\{\pi_0, \pi_1 \}$ by an integral linear transformation. In particular, the central charge lattice of the BPS states of the $N=2$ theory is also given by
\begin{equation} \label{eq:BPSmasslatticeLegendre}
\{m\,\pi_0+n\,\pi_1:(m,n) \in \mathbb{Z}^2 \}.
\end{equation}
After studying the behaviors of the limits of $\{\pi_0, \pi_1 \}$ near the singularities $\lambda=0,1, \infty$ and compare them with that of $\{a, a_D \}$ near $u=\pm 1, \infty$, which have been computed in \cite{SeibergWitten}, we find that \cite{SeibergWitten}
\begin{equation} \label{eq:fieldrelations}
a=-\pi_1-\pi_0,~ a_D=-\pi_1.
\end{equation}
Hence from formula \eqref{eq:CCCu}, the complexified coupling constant $\tau_{sw}$ of the $N=2$ theory is also given by
\begin{equation}
\tau_{sw} =\frac{\pi'_1}{\pi'_0+\pi'_1}=\frac{\varpi_1}{\varpi_0+\varpi_1}= \frac{\tau}{\tau + 1},
\end{equation}
where we have used formula \eqref{eq:BPSperiodsDerivative}.

The derivatives $\varpi'_i(\lambda), ~i=0,1$ are two linearly independent solutions for the hypergeometric differential equation
\begin{equation} \label{eq:BPS_PF}
\lambda (1-\lambda) \frac{d^2(\varpi'_i)}{d\lambda^2} +(2-4\lambda) \frac{d(\varpi'_i)}{d\lambda}-\frac{9}{4}\varpi'_i=0.
\end{equation}
In fact, $\varpi'_0(\lambda)$ is the hypergeometric function
\begin{equation}
\varpi'_0(\lambda) =\frac{1}{4} \,\, _2F_1(\frac{3}{2},\frac{3}{2};2; \lambda).
\end{equation}
On the other hand, the BPS periods $\pi_i(\lambda), ~i=0,1$ satisfy the following ODE
\begin{equation} \label{eq:qformPF}
\frac{d^2\pi_i}{d\lambda^2}+Q(\lambda) \pi_i =0,~i=0,1;
\end{equation}
where $Q(\lambda)$ is the rational expression
\begin{equation}
Q(\lambda) =-\frac{1}{4\lambda(1-\lambda)}.
\end{equation}
The ODE \ref{eq:qformPF} is the Q-form of the hypergeometric differential equation \ref{eq:BPS_PF}. This Q-form is closely related to Schwarzian derivative \cite{Hille}.

More explicitly, we will further define the Seiberg-Witten period $\varphi$ to be the quotient
\begin{equation} \label{eq:SeibergWittenPeriod}
\varphi(\lambda):=\pi_1(\lambda)/\pi_0(\lambda),
\end{equation}
which is also equal to
\begin{equation}
\varphi(\lambda) =\varpi'_1(\lambda)/\varpi'_0(\lambda).
\end{equation}
While the quotient $a_D/a$ can be expressed as
\begin{equation}
\frac{a_D}{a} =\frac{\pi_1}{\pi_0+\pi_1}=\frac{\varphi}{\varphi + 1}.
\end{equation}
There is a fundamental relation between the Schwarzian derivative and ODE in the complex plane, which immediately tells us that \cite{Hille}
\begin{equation} \label{eq:SDlambda}
\{\varphi, \lambda  \} =2 Q(\lambda).
\end{equation}
However, we will see that it is the Schwarzian derivative of $\varphi$ with respect to the variable $\tau$ of the upper half-plane $\mathbb{H}$ that is going to be more interesting and important in this paper \cite{SS}.

\section{ The Schwarzian derivative of the Seiberg-Witten period } \label{sec:SchwarzianDerivative}

In this section, we will first apply the results in Section \ref{sec:modularlambdafunction} to obtain the transformations of the BPS periods $\{ \pi_0, \pi_1 \}$ under the action of $\text{PSL}(2, \mathbb{Z})$. Then we will study the Schwarzian derivative of the Seiberg-Witten period $\varphi$ \eqref{eq:SeibergWittenPeriod} with respect to $\tau$, which is given by the theta function.

\subsection{The transformations of the BPS periods} \label{sec:BPSperiodsTransformation}

From formula \eqref{eq:canonicalTau}, the derivative of $\varpi_1$ with respect to $\lambda$ can also be written as
\begin{equation} \label{eq:varpi1prime}
\varpi'_1=(\tau \varpi_0)'=\frac{d\tau}{d\lambda}\,\varpi_0+\tau \,\varpi_0'=\frac{1}{\pi i} \frac{1}{\lambda(1-\lambda)\varpi_0} + \tau \,\varpi'_0,
\end{equation}
where we have used formula \eqref{eq:piidtaudlambda}. Under the action of $S: \tau \mapsto -1/\tau$, we have
\begin{equation}
S:\lambda \mapsto 1- \lambda;~d \lambda \mapsto -d\lambda;~\varpi_0 \mapsto i \tau \varpi_0=i \varpi_1;~\varpi_1=\tau\,\varpi_0 \mapsto -\frac{1}{\tau}\, i \tau \varpi_0 =-i\,\varpi_0;
\end{equation}
hence the transformations of $\{ \pi_0, \pi_1 \}$ under $S$ are given by
\begin{equation}
\begin{aligned}
S:\pi_0 &\mapsto -i \,\pi_1,\\
S:\pi_1 &\mapsto i\,\pi_0.
\end{aligned}
\end{equation}
Therefore the Seiberg-Witten period $\varphi$ \eqref{eq:SeibergWittenPeriod} transforms according to
\begin{equation}
S:\varphi \mapsto -\frac{1}{\varphi}.
\end{equation}

The transformation of $\{ \pi_0, \pi_1 \}$ under the action of $T$ is more complicated. Under the action of the transformation $T: \tau \mapsto \tau+1$, we have
\begin{equation}
T:\lambda \mapsto \frac{\lambda}{\lambda-1},~d\lambda \mapsto -\frac{1}{(\lambda-1)^2}\, d\lambda ,~\varpi_0 \mapsto \sqrt{1-\lambda} \,\varpi_0;
\end{equation}
from which we obtain
\begin{equation}
\begin{aligned}
T:\pi_0 & \mapsto\frac{2 \,\lambda }{\sqrt{1-\lambda } } \big(2 (\lambda -1) \varpi _0'+\varpi _0\big),\\
T:\pi_1  & \mapsto \frac{2 \, \lambda}{\sqrt{1-\lambda }}  \left(\left(2 (\lambda -1)  \frac{d\tau}{d\lambda} +\tau +1\right)\varpi _0+2 (\lambda -1) (\tau+1) \varpi _0'\right).
\end{aligned} 
\end{equation}
However under the action of $T^2: \tau \mapsto \tau+2$, we have 
\begin{equation}
T^2:\lambda \mapsto \lambda, ~d\lambda \mapsto d\lambda, ~\varpi_0 \mapsto \varpi_0,
\end{equation}
hence the BPS periods $\{ \pi_0, \pi_1 \}$  transform in the way
\begin{equation}
\begin{aligned}
 T^2:\pi_0 & \mapsto  \pi_0, \\
 T^2: \pi_1 & \mapsto  \pi_1+2 \,\pi_0.
 \end{aligned}
\end{equation}
Thus the Seiberg-Witten period $\varphi$ \eqref{eq:SeibergWittenPeriod} transforms in the way
\begin{equation}
T^2:\varphi \mapsto \varphi+2.
\end{equation}
The upshot is that $\varphi$ is equivariant with respect to the actions of $S$ and $T^2$.

\subsection{The fixed point of the $S$-transformation} \label{sec:fixedpoint}

The point $\lambda=1/2$, i.e. $u=0$, has an interesting property. Recall that the period $\tau$ of $\mathscr{E}_{\lambda}$ can also be expressed as \cite{Chandra}
\begin{equation}
\tau =i \frac{_2F_1(1/2,1/2;1; 1-\lambda)}{_2F_1(1/2,1/2;1; \lambda)},
\end{equation}
hence the value of $\tau$ at $\lambda = 1/2$ is
\begin{equation}
\tau|_{\lambda=1/2} = i.
\end{equation}
The elliptic curve $\mathscr{E}_{1/2}$ 
\begin{equation}
\mathscr{E}_{1/2}: y^2=x(x-1)(x-1/2)
\end{equation}
admits CM (complex multiplication), and its $j$-invariant is $1728$.

From the formula \eqref{eq:varpi1prime}, the Seiberg-Witten period $\varphi$ can also be expressed as
\begin{equation}
\varphi=\tau +\frac{1}{\pi i} \frac{1}{\lambda (1-\lambda) \varpi_0 \varpi'_0}.
\end{equation}
From the following special values of hypergeometric functions
\begin{equation}
\begin{aligned}
_2F_1(1/2,1/2;1; \lambda)|_{\lambda=1/2}&=\frac{\Gamma(1/4)}{(2 \pi)^{1/2} \Gamma(3/4)},\\
_2F_1(3/2,3/2;2; \lambda)|_{\lambda=1/2}&=-\frac{\Gamma(-1/4)}{(2 \pi)^{1/2} \Gamma(5/4)},
\end{aligned}
\end{equation}
we immediately obtain that the value of $\varphi$ at $\lambda =1/2$ is $-i$. Thus the central charge lattice of the BPS states \eqref{eq:BPSmasslatticeLegendre} has an additional symmetry. It is very interesting to find out whether this symmetry has any physical significance.

\subsection{The Schwarzian derivative}

Inspired by the paper \cite{SS}, let us now compute the Schwarzian derivative of the  Seiberg-Witten period $\varphi$ with respect to the variable $\tau$. From the results in Section \ref{sec:BPSperiodsTransformation},   $\varphi(\tau)$ is equivariant with respect to the modular group $\Gamma(2)$. First, $\varphi$ can be expressed as
\begin{equation} \label{eq:mirrormapnonperturbative}
\varphi =\frac{\varpi'_1}{\varpi'_0}=\frac{1}{\pi i} \left(\log \lambda+\frac{1}{\lambda} \frac{\varpi_0}{\varpi'_0}+\frac{h'}{\varpi'_0}-\log 16 \right).
\end{equation} 
An explicit computation of the first several terms of the $q$-expansion of $\{\varphi, \tau \}$, the transformation properties in the formula \ref{eq:SDvariabletransformation2} and results in the paper \cite{SS} show that 
\begin{equation}
\{\varphi, \tau \}=\frac{\pi^2}{2} \theta^8_3(0,q),~q=\exp(\pi i \tau).
\end{equation}
We have also proved this equation by using the property \eqref{eq:SDcocycle} of the Schwarzian derivative,  formula \eqref{eq:SDlambda} and the results in Section \ref{sec:modularlambdafunction}. Furthermore, using the properties \eqref{eq:SDMobiusTransformation} and \eqref{eq:SDvariabletransformation2}, we obtain
\begin{equation}
\{a_D/a, \tau_{sw} \}=\frac{\pi^2}{2} \theta^8_4(0,e^{\pi i \tau_{sw}}),
\end{equation}
where we have used the equation
\begin{equation}
 \theta^8_4 \left( 0,e^{\pi i \tau/(\tau+1)} \right) =(\tau +1)^4  \theta^8_3(0,e^{\pi i \tau}).
\end{equation}

The function $\theta^8_4(0,q)$ has an expansion of the form
\begin{equation}
\theta^8_4(0,q) = 1-16q+112q^2-448q^3+1136q^4-2016q^5+3136q^6+\cdots.
\end{equation}
On the other hand, the modular group $\Gamma_0(2)$ has a unique weight-4 normalized entire modular form $E_{0,4}(\tau)$ given by \cite{Diamond, OEIS}
\begin{equation}
E_{0,4}(\tau)=\eta^{16}(\tau)/\eta^8(2\tau).
\end{equation}
The modular form $E_{0,4}(\tau)$ is related to $\theta^8_4(0,q)$ by
\begin{equation}
\theta^8_4(0,q) =E_{0,4}(\tau/2).
\end{equation}
The readers are referred to the paper \cite{Borcherds} for some applications of the modular form $E_{0,4}(\tau)$.

\section{Great Picard's Theorem and the non-perturbative limit} \label{sec:greatpicards}

In this section, we will prove that the quotient $a_D/a$ does take real values in a small neighborhood of the nonperturbative limits $u=\pm 1$ using the Great Picard's Theorem. In fact, we will prove an equivalent result: the Seiberg-Witten period $\varphi$ \eqref{eq:SeibergWittenPeriod} takes real values near $\lambda = 0, 1 $.

First, let us look at the case where $\lambda=0$. Define $q^{sw}$ by
\begin{equation}
q^{sw}:=\exp \pi i\,\varphi.
\end{equation}
Thus $\varphi$ is real if and only if the modulus of $q^{sw}$ is 1. Formula \eqref{eq:mirrormapnonperturbative} implies
\begin{equation}
16\,q^{sw}=\lambda \exp (h'/\varpi'_0) \exp \left(\frac{1}{\lambda} \frac{\varpi_0}{\varpi'_0} \right).
\end{equation}
The power series expansion of $\varpi_0/\varpi'_0 $ in a small neighborhood of $\lambda=0$ is of the form
\begin{equation}
\varpi_0/\varpi'_0  =4-\frac{7}{2}\,\lambda+O(\lambda^2),
\end{equation}
from which we deduce that
\begin{equation}
16\,q^{sw}=F(\lambda) \exp (4/\lambda),~\text{with}~F(\lambda)=\lambda \exp (h'/\varpi'_0) \exp \left(\frac{1}{\lambda} \left( \frac{\varpi_0}{\varpi'_0}-4\right) \right).
\end{equation}
The function $F(\lambda)$ is holomorphic in a neighborhood of $\lambda=0$ and it admits a series expansion with the first term given by
\begin{equation}
F(\lambda)=\lambda+O(\lambda^2).
\end{equation}
The function $F(\lambda) \exp (4/\lambda)$ is holomorhpic in a small punctured neighborhood of $0$, while it has an essential singularity at $\lambda=0$. Now recall the following important theorem of Picard. 

\textbf{Great Picard's Theorem}: \textit{If $G(z)$ is a holomorphic function and $z_0$ is an essential singularity of $G(z)$,  then on any punctured neighborhood of $z_0$, $G$ takes on all possible complex values, with at most a single exception, infinitely many times.}

In particular, the \textbf{Great Picard's Theorem} implies that in every punctured neighborhood of $\lambda=0$, which is assumed to be small enough to avoid other singularities, the function $F(\lambda) \exp (4/\lambda)$ takes on all possible values in $\{\exp \pi i\,t: t \in \mathbb{R} \}$, with at most a single exception, infinitely many times. Therefore the behavior of $ \varphi$ in a small punctured neighborhood of $\lambda=0$ is of a very complicated nature. The behavior of the Seiberg-Witten period $\varphi$ at $\lambda=1$ can be analyzed similarly, which is related to the $\lambda=0$ case by an $S$-transformation.

\section{The scalar curvature of the moduli space} \label{sec:metric}

In this section, we will compute the scalar curvature of the moduli space $\mathscr{M}$ for the $N=2$ supersymmetric Yang-Mills theory. We will look at the behaviors of the scalar curvature near the singularities $u=\pm 1, \infty$.

From the formulas \eqref{eq:BPSperiodsDerivative} and \eqref{eq:fieldrelations}, the metric $(ds)^2$ \eqref{eq:metriconmodulispace} on the moduli space $\mathscr{M}$ can also be expressed as
\begin{equation} \label{eq:dsvarpi}
(ds)^2=-\frac{i}{2}(\varpi_1 \overline{\varpi}_0-\varpi_0\overline{\varpi}_1)d\lambda d\overline{\lambda}.
\end{equation}
We immediately recognize that this metric can be written down in a geometrically invariant way 
\begin{equation}
(ds)^2=(\frac{i}{2}\int_{\mathscr{E}_{\lambda}} \Omega_\lambda \wedge \overline{\Omega}_\lambda) d\lambda d\overline{\lambda}.
\end{equation}
The pull-back of the metric \eqref{eq:dsvarpi} to the upper half plane $\mathbb{H}$ becomes
\begin{equation}
(ds)^2=\left|\varpi_0 \frac{d\lambda}{d\tau} \right|^2~(\text{Im}\,\tau)d\tau d\overline{\tau},
\end{equation}
which is of a rather simple and explicit form. From the formula \eqref{eq:dlambdadtau}, we deduce that
\begin{equation} \label{eq:dstau}
(ds)^2=\pi^2 \left| \varpi_0^3 \lambda(1-\lambda) \right|^2 (\text{Im}\,\tau)d\tau d\overline{\tau}.
\end{equation}
The scalar curvature for the metric $(ds)^2$ \eqref{eq:dstau} is
\begin{equation}
R=\frac{1}{\pi^2\left| \varpi_0^3 \lambda(1-\lambda)  \right|^2 (\text{Im}\,\tau)^3 }.
\end{equation}
The scalar curvature $R$ is invariant under the action of $S$. While under the action of $T$, $R$ transforms according to
\begin{equation}
T:R \rightarrow \left| 1-\lambda \right|^{3} R. 
\end{equation}
Moreover, $R$ is invariant under the action of $T^2$. The scalar curvature $R$ has three singularities at $\lambda = 0, 1, \infty$. Let us now analyze the the behaviors of $R$ near them. When $\tau = i \infty$, we have $\lambda = 0$. On the imaginary axis, 
\begin{equation} \label{eq:imaginaryaxis}
\tau = i t,~t \in \mathbb{R}^+,
\end{equation}
when $t$ goes to infinity, the leading term of $R$ is
\begin{equation}
R \sim \frac{1}{256\pi^2 t^3  } \exp(2 \pi t),
\end{equation}
which approaches infinity exponentially. The behavior of $R$ near $\lambda=1$ ($\tau = 0$), is the same as that of $R$ near $\lambda=0$, which can be deduced from the property that $R$ is invariant under the $S$-transformation. Notice that the $S$-transformation maps a neighborhood of $\lambda = 0$ to a neighborhood of $\lambda=1$.

Now let us look at the behavior $R$ in a small neighborhood of $\lambda = \infty$. First do a $T$-transformation, which sends $\lambda= \infty$ to $\lambda =1$. Next do an $S$-transformation, which sends $\lambda=1$ to $\lambda=0$. Under the composition of these two transformations, $R$ transforms to
\begin{equation}
R \rightarrow |\lambda|^3 R=\frac{|\lambda|}{\pi^2\left| \varpi_0^3 (1-\lambda)  \right|^2 (\text{Im}\,\tau)^3}.
\end{equation}
On the imaginary axis \eqref{eq:imaginaryaxis}, when $t$ is large, the leading part of $|\lambda|^3R$ is 
\begin{equation}
|\lambda|^3 R \sim \frac{16}{\pi^2t^3} \exp(-\pi t),
\end{equation}
which approaches 0 exponentially. Hence we deduce that $R$ approaches 0 exponentially when $\lambda$ goes to infinity. So the moduli space $\mathscr{M}$ is asymptotically flat near $\lambda =\infty$, which reflects the property that $\lambda= \infty$ ($u=\infty$) is a perturbative limit of the $N=2$ supersymmetric Yang-Mills theory.

\section{Conclusions and further prospects } \label{sec:conclusion}

In this paper, we have applied the tools from number theory to the study of Seiberg-Witten theory. We have reviewed the properties of the periods of the Legendre family of elliptic curves, and their relations to the modular lambda function and the theta functions, which are included here in order to make this paper as self-contained as possible. Based on these results, we have studied the transformations of the Siberg-Witten period $\varphi$ \eqref{eq:SeibergWittenPeriod} under the action of the modular group $\text{PSL}(2, \mathbb{Z})$. We have shown that the Schwarzian derivative of $\varphi$ with respect to the period $\tau$ is given by the theta functions. This allows us to obtain the equation \eqref{eq:swperiodSchwarzian}, i.e. the Scharzian derivative of the quotient $a_D/a$ with respect to the complexified coupling constant $\tau_{sw}$ of the $N=2$ theory is also given by the theta functions.

We have also studied the behaviors of the quotient $a_D/a$ near the non-perturbative limit $u=\pm 1$ by using the Great Picard's Theorem. In particular, we have shown that $a_D/a$ can take on any real values, with at most a single exception, infinitely many times. The scalar curvature for the moduli space of the $N=2$ theory has also been computed. It has been shown that the scalar curvature approaches 0 exponentially near the perturbative limit $u=\infty$, so the moduli space is asymptotically flat near $u=\infty$. While the scalar curvature of the moduli space approaches infinity exponentially near the the two non-perturbative limit $u=\pm 1$.

There are several open questions left unanswered in this paper. Perhaps the most important open question is could the equation \eqref{eq:swperiodSchwarzian} bridge any new relations between the Seiberg-Witten theory and the theory of automorphic representations. For example, could we construct an automorphic representation from the quotient $a_D/a$ which is related to the counting of the BPS states of the $N=2$ supersymmetric Yang-Mills theory. It is also very interesting to see whether the equation \eqref{eq:swperiodSchwarzian} admits any physical interpretations.

As has been shown in Section \ref{sec:fixedpoint}, the point $u=0$ ($\lambda=1/2$) of the moduli space $\mathscr{M}$ is a special point. The lattice for the elliptic curve $\mathscr{E}_{1/2}$, which has $j$-invariant 1728, is 
\begin{equation}
\{m+n i:(m,n) \in \mathbb{Z}^2 \}.
\end{equation} 
While the central charge lattice of the $N=2$ theory for the vacuum with coordinate $u=0$ is isomorphic to this lattice. It is very interesting to see whether this property has any physical significance in the $N=2$ supersymmetric Yang-Mills theory.

\section*{Acknowledgments}

The author is grateful to Shamit Kachru and Arnav Tripathy for many helpful discussions and a reading of the draft.

\end{document}